\documentclass[11pt]{article}

\usepackage{amsmath}
\usepackage{amsfonts}
\usepackage{latexsym}
\usepackage{amssymb}
\usepackage{graphicx}
\usepackage{amsthm}
\usepackage{color}
\usepackage{chemarrow}
\usepackage{fullpage}
\usepackage{framed}
\usepackage{verbatim}
\usepackage{geometry}
\usepackage{color}
\usepackage{epsfig}
\usepackage{paralist}
\usepackage{hyperref}
\usepackage{makeidx}
\usepackage{authblk}

\usepackage{tikz}

\DeclareMathOperator{\pay}{pay}
\DeclareMathOperator*{\argmax}{\arg\!\sup}
\DeclareMathOperator*{\E}{\mathbb{E}}

\theoremstyle{definition}
\newtheorem{theorem}{\bf{Theorem}}[section]

\newtheorem{remark}[theorem]{Remark}
\newtheorem{definition}{\bf{Definition}}[section]

\theoremstyle{remark}
\newtheorem{example}{\bf{Example}}[section]

\newtheorem{notation}{Notation}[definition]

\newcommand{\MR}{\mathbb{R}}
\newcommand{\MZ}{\mathbb{Z}}
\newcommand{\MN}{\mathbb{N}}

\newcommand{\op}{\operatorname}

        {\begin{list}
                {\noindent\makebox[0mm][r]{$\bullet$}}
                {\leftmargin=5.5ex \usecounter{enumi}
 		 \topsep=1.5mm \itemsep=-.75ex}
        }
        {\end{list}}

\title{Playing games in an uncertain world}
\author{\small Manoj Gopalkrishnan\thanks{manojg@tifr.res.in}, Girish Varma\thanks{girishrv@tifr.res.in}}
\affil{School of Technology and Computer Science,\\ Tata Institute of Fundamental Research, Mumbai, India.}
%\affil{School of Technology and Computer Science,\\ Tata Institute of Fundamental Research, Mumbai, India.}
%\date{15 November, 2013}

\begin{document}
\maketitle
\begin{abstract}
Traditional game theory assumes that the players in the game are aware of the rules of the game. However, in practice, often the players are unaware or have only partial knowledge about the game they are playing. They may also have knowledge that other players have only partial knowledge of the game they are playing, which they can try to exploit. We present a novel mathematical formulation of such games. We make use of Kripke semantics, which are a way to keep track of what different players know and do not know about the world. We propose a notion of equilibrium for such games, and show that equilibrium always exists.
\end{abstract}

%\begin{center}
%\Large{Draft: Do Not Distribute}
%\end{center}

\section{Introduction}
%Targeting http://www.sigecom.org/ec14/ Feb 11.
%To send to Ankur Kulkarni, Paritosh, Rohit Parikh, Rajesh Sundaresan, Borkar for quick comments.

Recall the Monty Hall problem. A games show host shows a contestant three doors, and tells the contestant that there are prizes behind one of the doors, and no prize behind the other two. The contestant is asked to pick a door. The games show host opens one of the other doors, revealing that it has no prize. He asks the contestant if they want to switch their choice, or stick with their original choice.

It is well known that most people intuitively feel it makes no difference if they switch or not, whereas the best strategy is to indeed switch. The standard explanation given for this apparently suboptimal human behavior has been that human beings are not good with Bayes' theorem, and probabilistic reasoning. This may well be true. However, there is another possible explanation. Perhaps the suboptimal behavior has to do with the contestant not being completely clear on what the rules of the game are. 

Perhaps the contestant is not sure about the following questions: Did the show host know that the door he threw open did not have anything behind it? Did he use that fact while making a choice of that door? Or did he randomly select a door, and then reveal it to have no prize behind it? What is the incentive for the games show host: help the contestant win a prize, make sure the contestant wins as less as possible, or entertain the audience? %he's just messing with the mind of the contestant to make them look stupid?

Ambiguity in the rules of the game being played is very common in real life. The payoff matrix for a game often depends on information that the players may possess only partially. To complicate things, players may have partial information about about the information available to other players. This can lead to higher-order strategic reasoning based on what other players know and do not know. 

Our contributions are:
\begin{itemize}
\item Kripke semantics is a well-developed field concerned with reasoning about knowledge. By combining Kripke semantics with game theory, we come up with a mathematical formulation we call \textbf{Kripke games} (Definition~\ref{def:kripke-game}). Our formulation allows a precise, direct, and granular treatment of the effect of epistemic considerations on strategy.
%\\
\item We propose a notion of \textbf{equilibrium} for Kripke games (Definition~\ref{def:kripkeeqm}). Our notion assumes that every player supposes the worst about her missing knowledge, and that this pessimism among players is common knowledge. This definition, within its limitations, has the merit that it succinctly captures the effect of arbitrary levels of epistemic reasoning among the players.
%worst-case (non-probabilistic) way of dealing with reasoning about knowledge and uncertainty. 
We prove in Theorem~\ref{thm:kripkeeqm}  that every Kripke game has an equilibrium.
%\\
\item We introduce the notion of \textbf{min-games} (Definition~\ref{def:mingame}). The idea is that the players are simultaneously involved in multiple games, and each player's payoff is the minimum across all these games. In Theorem~\ref{thm:kripkemin}, we prove that every Kripke game with pessimistic players reduces to a min-game, and in Theorem~\ref{thm:mingame}, we prove that every min-game has an equilibrium. 
\end{itemize}
Apart from their mathematical role in the above theorems, min-games appear to be interesting objects of study in their own right. One-player min-games correspond to optimization tasks where the payoff is the minimum among various payoffs. This setting is closely related to multi-objective optimization~\cite{miettinen1999nonlinear}, which is an active research area. Thus min-games may be viewed as a natural extension of multi-objective optimization to the context of games.

\section{An Example}\label{sec:example}

In traditional two-player zero-sum games~\cite{von2007theory}, one considers a game between two players, {\em Row} and {\em Column}. Each player has a choice of two pure strategies, strategy $1$ and strategy $2$. Their payoffs are described by a single $2\times 2$ \textbf{payoff matrix} with rows indexed by strategies of {\em Row}, and columns indexed by strategies of {\em Column}. {\em Row} tries to maximize payoff, and {\em Column} tries to minimize it.

What if the players don't know exactly what game they are playing? For concreteness, consider three possible worlds, call them world $1$, world $2$, and world $3$, with three different payoff matrices, as follows:
\begin{itemize}
\item In world $1$, the %game of \textbf{penalties} for which the 
payoff matrix is given by $\left(\begin{array}{cc}\
-1 & 1\
\\1 & -1\
\end{array}\right)$. 
\item In world $2$, the %game of \textbf{agreement} for which the 
payoff matrix is given by
$\left(\begin{array}{cc}
2 & 0\
\\0 & 1\
\end{array}\right)$. 
\item In world $3$, the payoff matrix is given by
$\left(\begin{array}{cc}
1 & -1\
\\0 & 2\
\end{array}\right)$.
\end{itemize}

If both players know the ``true world'' (or a probability distribution over the possible worlds) then this reduces to a two-player zero-sum game in the traditional sense of von Neumann and Morgenstern, and can be solved in the usual way. However, suppose the players don't even know a probability distribution over the possible worlds. In addition, though the players don't know which world is the true world, perhaps {\em Row} knows that the true world is either $1$ or $2$, whereas {\em Column} knows that the true world is either $2$ or $3$, and {\em Row} has some idea about what {\em Column} may know, and vice versa. This sort of strategic thinking is what we are interested in exploring. We will capture this through the device of \textbf{Kripke semantics}~\cite{kripke1959completeness} on the set of worlds.

We're going to introduce two equivalence relations $\equiv_r$ and $\equiv_c$, corresponding to the row and column player respectively, on the set $\{1,2,3\}$ of worlds. The idea is that all $\equiv_r$ worlds look identical to \emph{Row}, and similarly all $\equiv_c$ worlds look identical to \emph{Column}. For concreteness, we define $\equiv_r$ and $\equiv_c$ as the smallest equivalence relations containing the relations $1\equiv_r 2$ and $2\equiv_c 3$.  In detail:
\begin{itemize}
\item The equivalence relations themselves are assumed to be common knowledge between {\em Row} and {\em Column}. So both {\em Row} and {\em Column} know $1\equiv_r 2$ and $2\equiv_c 3$, and both know that both know it, and so on. 
%\\
\item If the true world is world $2$, then {\em Row} knows that the true world is either world $1$ or world $2$, since $1\equiv_r 2$, whereas {\em Column} knows that the true world is either world $2$ or world $3$, since $2\equiv_c 3$. 
%\\
\item If the true world is world $1$, then {\em Row} knows that the true world is either $1$ or $2$, since $1\equiv_r 2$. {\em Column} knows that the true world is world $1$ since $w\equiv_c 1$ implies $w = 1$. Further, {\em Column} knows that {\em Row} knows that the true world is either $1$ or $2$. And so on. 
\end{itemize}

So if the true world is world $3$, {\em Row} knows that the true world is world $3$. So {\em Row} could simply play the optimal strategy for Game $3$. However, {\em Row} knows that {\em Column} does not know if the true world is world $2$ or world $3$! Possibly {\em Row} can exploit this knowledge to do even better.

A computation shows that under the notion of equilibrium introduced in Definition~\ref{def:kripkeeqm}, the optimal strategy for \emph{Row} is to play either row with probability $1/2$ if the world is either $1$ or $2$, and to play the second row if the world is $3$. The optimal strategy for \emph{Column} is to play either column with probability $1/2$ if the world is $1$. If the world is $2$ or $3$, then \emph{Column} plays the first column with probability $0.6$. The corresponding expected payoffs in the three worlds are $0, 0, 4/5$ to \emph{Row}, and $0, -4/5, -4/5$ to \emph{Column} respectively.
%One way is to assume that both players are conservative, in the sense that they want to minimize their worst-case loss, and this conservativeness is common knowledge. That is, a third player, call her {\em God}, picks the real world. {\em Row} assumes that {\em God} is in a zero-sum game with {\em Row}. {\em Column} assumes {\em God} is in a zero-sum game with {\em Column}. {\em Row} knows what {\em Column} assumes about {\em God}. {\em Column} knows that {\em Row} knows what {\em Column} assumes about {\em God}, and so on.

\section{Kripke Games}
We first recall the well-known notions of finite (non-zero sum, non-cooperative) games, taking the opportunity to establish notation.

\begin{definition}[Finite Game]\label{def:game}
Let $P$ be a finite set. A \textbf{finite game with players $P$} is described by: for each player $p\in P$, a finite set $S_p$ of \emph{strategies}, and a \emph{payoff} $u_p:\prod_{q \in P} S_q \rightarrow \MR$.
\end{definition}

\begin{definition}[Mixed strategy]\label{def:mixstr}
Fix a finite game with players $P$. 
\begin{enumerate}
\item For each player $p\in P$, a \textbf{mixed strategy} is a probability distribution $\sigma_p$ over $S_p$. 
%\\
\item A \textbf{play} is a probability distribution $\sigma = \prod_{p\in P} \sigma_p$ on $\prod_{p\in P} S_p$, where each $\sigma_p$ is a mixed strategy for player $p$. 
%\\
\item A \textbf{play instance} is a random variable $X=(X_p)_{p\in P}$ such that $\op{law}(X)$ is a play, i.e., each $X_p$ takes values in $S_p$, and $\{X_p\}_{p\in P}$ are all independent.
%\\
\item The \textbf{payoff} $\pay_p(\sigma)$ to player $p$ from a play $\sigma$ is the expected payoff $\E[u_p(X)]$ where $X$ is a play instance with law $\sigma$.
\end{enumerate}
\end{definition}

%\begin{enumerate}
%\item A \textbf{mixed strategy} $\sigma^p$ for player $p$ is a probability distribution over $S_p$, i.e., $\sum_{s\in S_p} \sigma^p_s = 1$ and $\sigma^p_s\geq 0$ for all $s\in S_p$. 
%\item The \textbf{strategy simplex} $\Delta_p$ is the set of all mixed strategies of player $p$.
%
%\item The \textbf{play space} $\Delta$ is the set $\prod_{p\in P} \Delta_p$ of all plays.
%
%\item 
%tuple $(\sigma^p)_{p\in P}$ of mixed strategies, one for each player.

%\item For each player $p\in P$, the \textbf{payoff} $u_p$ extends by linearity to a map from the play space $\Delta$ to $\MR$, i.e., for a play $\sigma = (\sigma^q)_{q\in P}\in \Delta$, the payoff equals   
%\end{enumerate}
%\end{definition}

We are now ready to introduce Kripke games.

\begin{definition}[Kripke Game]
\label{def:kripke-game}
Let $P$ and $W$ be finite sets. A \textbf{Kripke game with players $P$ on worlds $W$} is described by: for each player $p\in P$,
\begin{itemize}
%\item A finite set $P$ of \emph{players},
\item A finite set $S_p$ of \emph{strategies},
%\\
%\item A finite set $W$ of \emph{worlds},
\item An equivalence relation $\equiv_p$ on the set $W$ of worlds,
%\\
\item For each world $w\in W$, a \emph{payoff} $u^w_p:\prod_{q\in P} S_q\to \MR$.
\end{itemize}
\end{definition}

\begin{remark}\label{rmk:union}
Note that the set $S_p$ of strategies of player $p$ is the same in all worlds. There is no loss of generality in this assumption since, if a player is allowed different strategies on different worlds, we can always reduce to this case by the following trick. We take the union of all possible strategies and declare that to be the set $S_p$. We extend the payoff matrix, by making the payoff $-A$ if the player plays a previously non-existent strategy in a world, where $A$ is a real number sufficiently large to ensure that the player will never want to play that strategy.
\end{remark}

\begin{notation}
$[W]_p$ will denote the set of $\equiv_p$-equivalence classes of $W$. If $w\in W$ then $[w]_p\in [W]_p$ will denote the $\equiv_p$-equivalence class containing $w$. That is, $[w]_p = \{v\in W\mid v\equiv_p w\}$.
\end{notation}

\begin{definition}[Strategy]\label{def:st-kripke-game}
Fix a Kripke game with players $P$ on worlds $W$.
\begin{enumerate}
\item A \textbf{pure strategy} $s_p$ for player $p$ is a map from $[W]_p$ to $S_p$.
%\\
\item A \textbf{mixed strategy} $\sigma_p$ for player $p$ is a probability distribution over the pure strategies $S_p^{[W]_p}$ of player $p$.
%\\
\item A \textbf{play} is a probability distribution $\sigma = \prod_{p\in P} \sigma_p$, where each $\sigma_p$ is a mixed strategy.
%\\
\item A \textbf{play instance} is a random variable $X=(X_p)_{p\in P}$ such that $\op{law}(X)$ is a play, i.e., each $X_p$ takes values in $S_p^{[W]_p}$, and $\{X_p\}_{p\in P}$ are all independent.
%\\
%\item Fix independent random variables $\{X_p\}_{p\in P}$ with laws $\sigma_p$. Define the random variable $X=\prod_{p\in P} X_p([w]_p)$ taking values in $\prod_{p\in P} S_p$, obtained by evaluating each of the random variables $X_p$ on the $\equiv_p$-equivalence class $[w]_p$ of $w$. 

\item The \textbf{specialization} $X(w)$ of a play instance $X=(X_p)_{p\in P}$ to the world $w$ is the random variable $(X_p([w]_p))_{p\in P}$ taking values in $\prod_{p\in P} S_p$. We will view $X(w)$ as a play instance on the finite game in the world $w$.
%\\
%\item The \textbf{specialization} $\sigma(w)$ of the play $\sigma= \prod_{p\in P} \sigma^p$ to the world $w$ is the law of the specialization $X(w)$, where $X$ is a play instance with law $\sigma$. We will view $\sigma(w)$ as a play on the finite game in the world $w$.

\item The \textbf{payoff} $\pay^w_p(\sigma)$ for player $p$ in world $w$ from play $\sigma$ is the minimum over all worlds $v\in [w]_p$ of the expected payoff $\E[u^v_p(X(v))]$, where $X$ is a play instance with law $\sigma$. That is:
\[
	\pay^w_p(\sigma) := \min_{v\in [w]_p} \E[u^v_p(X(v))].
\]
\end{enumerate}
\end{definition}

The definition of the payoff $\pay^w_p$ requires further explanation. Note that if $v\equiv_p w$ then in fact $\pay^v_p = \pay^w_p$. The idea is that if the true world is $w$, then player $p$ only knows that the true world belongs to the equivalence class $[w]_p$. So player $p$ takes a worst-case attitude, assuming that her payoff will be the worst payoff over all worlds $v\in [w]_p$. 

\begin{notation}
Fix a Kripke game with players $P$. We will denote by $\Sigma_p$ the set of mixed strategies for player $p$, and by $\Sigma$ the set $\prod_{p\in P} \Sigma_p$ of all plays. If $\sigma=\prod_{p\in P} \sigma_p$ is a play, and $\sigma'_q$ is a mixed strategy for player $q$ then $\sigma - \sigma_q + \sigma'_q $ will denote the play $\sigma'_q\times \prod_{p\in P\setminus\{q\}} \sigma_p $ obtained from $\sigma$ by replacing the mixed strategy $\sigma_q$ of player $q$ by the mixed strategy $\sigma'_q$.
\end{notation}

\begin{definition}[Equilibrium in a Kripke Game]\label{def:kripkeeqm}
Fix a Kripke game with players $P$ on worlds $W$.
\begin{enumerate}
\item A player $p\in P$ \textbf{tolerates} a play $\sigma$ iff for all worlds $w\in W$, for all mixed strategies $\sigma'_p\in\Sigma_p$, the payoff $\pay^w_p(\sigma)\geq \pay^w_p(\sigma - \sigma_p + \sigma'_p)$.
%\\

%\[
%\sigma_p \in \argmax_{\sigma'_p \in \Sigma_p} \pay^w_p(\sigma - \sigma_p + \sigma'_p).
%\]
%\item A player $p$ is \textbf{locked into} a play $\sigma$ iff $p$ tolerates $\sigma$ and for all mixed strategies $\sigma'_p \in \Sigma_p\setminus \{ \sigma_p\}$, there exists a world $w\in W$ such that the payoff  $\pay^w_p(\sigma)> \pay^w_p(\sigma - \sigma_p + \sigma'_p)$.

\item A play $\sigma$ is an \textbf{equilibrium} iff every player tolerates $\sigma$.%, and is a \textbf{strict equilibrium} iff every player is locked into $\sigma$.
\end{enumerate}
\end{definition}

Note that our notion of equilibrium assumes that each player is pessimistic/ risk-averse, \emph{and that this is common knowledge}. What is the psychological validity of our assumption? There is some evidence in the psychology literature in the work of Kahneman and Tversky~\cite{kahneman1979prospect} that human beings are indeed risk-averse in decision making. It is an important question, and possibly one that lends itself to empirical investigation, whether and in what settings human beings treat such risk-averseness among all human players as common knowledge. 

We emphasize that Kripke games should admit various other notions of equilibrium, for example, inspired by ideas from bounded rationality, or repeated games, which we will not explore further in this paper. Our main theorem is the following.

\begin{theorem}\label{thm:kripkeeqm}
Every Kripke game has an equilibrium play.
\end{theorem}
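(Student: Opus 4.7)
The plan is to apply Kakutani's fixed point theorem to a best response correspondence on the compact convex space of plays. Each $\Sigma_p$ is the simplex of distributions on the finite set $S_p^{[W]_p}$, so $\Sigma = \prod_{p \in P} \Sigma_p$ is compact and convex. Define the best response correspondence $BR_p(\sigma_{-p}) \subseteq \Sigma_p$ to be the set of $\sigma_p$ such that $\pay^w_p(\sigma_p, \sigma_{-p}) \geq \pay^w_p(\sigma'_p, \sigma_{-p})$ for every world $w \in W$ and every $\sigma'_p \in \Sigma_p$. By Definition~\ref{def:kripkeeqm}, a fixed point of $\sigma \mapsto \prod_{p \in P} BR_p(\sigma_{-p})$ is exactly an equilibrium play, so it suffices to produce such a fixed point.

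The crucial structural observation is that the apparently strong demand ``best in every world $w$ simultaneously'' decouples cleanly across the $\equiv_p$-equivalence classes. The specialization $X(v)$ used in computing $\pay^w_p$ touches player $p$'s pure strategy only through its value on the class $[v]_p = [w]_p$, so $\pay^w_p(\sigma)$ depends on $\sigma_p$ only through its marginal $\sigma_p^{[w]_p} \in \Delta(S_p)$ at that class. Working in the payoff-equivalent product space $\prod_{C \in [W]_p} \Delta(S_p)$, the best response reduces to choosing, independently for each class $C$, a distribution $\mu_C$ that maximizes $g_C(\mu_C, \sigma_{-p}) := \min_{v \in C} \E[u^v_p(X(v))]$, where $\sigma_p$ is configured to use $\mu_C$ on class $C$. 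The inner expectation is multilinear in the marginals of all players and in particular linear in $\mu_C$, so $g_C(\cdot, \sigma_{-p})$ is a minimum of finitely many linear functions, hence concave and continuous.

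With that in hand, Kakutani's hypotheses are routine. For fixed $\sigma_{-p}$, the argmax of a continuous concave function over the compact convex simplex $\Delta(S_p)$ is nonempty, convex, and compact, and these properties are preserved under products, so each $BR_p(\sigma_{-p})$ is a nonempty convex compact subset of $\Sigma_p$. Joint continuity of $g_C$ in $(\mu_C, \sigma_{-p})$ follows from multilinearity of the underlying payoffs and continuity of finite minima, so Berge's maximum theorem gives upper hemicontinuity of each per-class argmax correspondence, hence of $BR_p$. Kakutani's theorem applied to $\sigma \mapsto \prod_{p \in P} BR_p(\sigma_{-p})$ then produces a fixed point, which is the required equilibrium.

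The step most deserving of care is the per-class decomposition in the second paragraph; without it the condition ``best in every world'' looks over-constrained and it is not obvious that $BR_p(\sigma_{-p})$ is even nonempty. Once this decomposition is made explicit, the argument reduces to the classical Nash-existence template, with concavity coming from the minimum over worlds rather than from bilinearity. An alternative path hinted at in the introduction's list of contributions is to first reduce a Kripke game to a min-game via Theorem~\ref{thm:kripkemin} and then invoke the existence result for min-games in Theorem~\ref{thm:mingame}; this repackages the same per-class concavity within a more general framework.
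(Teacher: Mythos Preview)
Your argument is correct. The per-class decomposition is the right observation: since $\pay^w_p(\sigma)$ depends on $\sigma_p$ only through its marginal at $[w]_p$, and different classes impose independent constraints, the set $BR_p(\sigma_{-p})$ is the (linear) preimage of $\prod_{C\in[W]_p}\argmax_{\mu\in\Delta(S_p)} g_C(\mu,\sigma_{-p})$, which is nonempty by compactness, convex by concavity of each $g_C$, and has closed graph by continuity. Kakutani then yields the fixed point, which is precisely a play tolerated by every player.

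The paper proceeds differently, though the core idea is the same. Rather than applying Kakutani directly to the Kripke game, it first \emph{reifies} your per-class decomposition by splitting each player $p$ into independent players $[w]_p$, one per $\equiv_p$-class, each with strategy set $S_p$ and payoff $\min_{v\in[w]_p} u^v_p$; this produces a min-game (Theorem~\ref{thm:kripkemin}). Existence for min-games is then established separately (Theorem~\ref{thm:mingame}), either by a Kakutani argument essentially identical to yours (with concavity again coming from the finite minimum), or by a second reduction that adjoins a ``world-chooser opponent'' $\hat p$ for each $p$ and appeals to Nash's theorem for ordinary finite games. Your route is shorter and keeps everything inside the original game; the paper's route buys modularity, isolating min-games as an object of independent interest and giving an alternative proof that avoids verifying Kakutani's hypotheses by hand. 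You already note this alternative in your last paragraph, so you have the full picture.
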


In Section~\ref{sec:mingames}, we define \textbf{min-games}, and show that they permit a notion of equilibrium. In Section~\ref{sec:kripketomin}, we show that Kripke games reduce to min-games, so that equilibrium plays in min-games allow us to obtain equilibrium plays in Kripke games.

%\comment{can one do a Kripke treatment for Prisoner's dilemma? What happens then?}

\section{Min-Games}\label{sec:mingames}
\begin{example}
Consider two zero-sum finite games between \emph{Row} and \emph{Column}, described by the payoff matrices $\left(\begin{array}{cc}\
-1 & 2\
\\1 & -3\
\end{array}\right)$ and 
$\left(\begin{array}{cc}
2 & -4\
\\0 & 5\
\end{array}\right)$, where \emph{Row} wants to maximize payoff, and \emph{Column} wants to minimize. Suppose \emph{Column} plays the first column. If \emph{Row} plays the first row, her payoff is $-1$ from the first game, and $2$ from the second game. If she plays the second row, her payoff is $1$ from the first game, and $0$ from the second game. 

In a min-game, we assume that the payoff to a player is the worst-case of her payoff from all the games. So \emph{Row}'s min-game payoff in response to \emph{Column} playing the first column is $-1$ if she plays the first row, and $0$ if she plays the second row. Her best response to \emph{Column} playing the first column is thus to play the second row. 

\emph{Column}'s payoffs to the play (row $2$, column $1$) are $1$ and $0$. Since \emph{Column} is trying to minimize, her worst-case payoff is the maximum of these two values, i.e., $1$. Note that in this play, \emph{Row}'s min-game payoff is coming from the second game, and \emph{Column}'s min-game payoff is coming from the first game. So it is possible for different players to get their min-game payoff values from different games. We now formalize these ideas, and introduce a notion of equilibrium for min-games.
\end{example}
%
%one player playing two games, with her strategy set common to the two games, and her payoff the minimum of the two payoffs. She has two pure strategies, $s_1$ and $s_2$. The corresponding payoffs in game $1$ and game $2$ are $u^1(s_1) = 0$ and $u^1(s_2) = 1$ and $u^2(s_1) = 1$ and $u^2(s_2) = 0$ respectively. 
%It is possible that two players get their payoffs from different games.
%\comment{More interesting example.}
%\end{example}

\begin{definition}[Min-Game]\label{def:mingame}
Fix a finite set $P$ and a positive integer $k\in\MZ_{\geq 1}$. A \textbf{min-$k$-game} with players $P$ is described by: for each player $p\in P$, a set $S_p$ of \emph{strategies}, and \emph{payoffs} ${u^1_p,u^2_p,\dots,u^k_p : \prod_{q\in P} S_q \to \MR}$.
\end{definition}

\begin{remark}
The intuitive idea is that there are $k$ parallel games going on. We are assuming that the set of strategies for each player is the same in all the $k$ games she is playing. This causes no loss of generality, by an argument parallel to the argument offered in Remark~\ref{rmk:union}.
\end{remark}

The definitions of mixed strategy, play, and play instance carry over to min-games unchanged from the setting of finite games in Definition~\ref{def:mixstr}. The key difference is in the definition of payoff.

\begin{definition}\label{def:mingamepay}
Fix a min-$k$-game with players $P$. 
\begin{enumerate}
\item The \textbf{payoff} $\pay_p(\sigma)$ to player $p$ from a play $\sigma$ in a min-$k$-game is defined by
\[
\pay_p(\sigma) := \min\left\{ \E[u^1_p(X)],\E[u^2_p(X)],\dots,\E[u^k_p(X)]\right\}
\] 
where $X$ is a play instance with law $\sigma$. 
\item Player $p$ \textbf{tolerates} play $\sigma$ iff for all mixed strategies $\sigma_p'$ for player $p$, the payoff \[{\pay_p(\sigma - \sigma_p + \sigma_p')\leq \pay_p(\sigma)}.\]
%\\
\item\label{def:mingameeq} An \textbf{equilibrium} is a play tolerated by every player.
\end{enumerate}
\end{definition}

We now state the main theorem about min-games, which shows existence of equilibria.

\begin{theorem}\label{thm:mingame}
Every min-game has an equilibrium play.
\end{theorem}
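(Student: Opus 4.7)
The plan is to adapt Nash's proof of equilibrium existence via Kakutani's fixed point theorem, with one extra ingredient to handle the min over the $k$ payoff functions. Define, for each player $p$ and each play $\sigma\in\Sigma$, the best-response correspondence
\[
	BR_p(\sigma) \;=\; \bigl\{\sigma'_p\in \Sigma_p \;\bigm|\; \pay_p(\sigma - \sigma_p + \sigma'_p) = \max_{\tau_p\in \Sigma_p} \pay_p(\sigma - \sigma_p + \tau_p)\bigr\},
\]
and let $BR(\sigma) := \prod_{p\in P} BR_p(\sigma) \subseteq \Sigma$. A fixed point $\sigma^*\in BR(\sigma^*)$ is, by unwinding definitions, exactly an equilibrium in the sense of Definition~\ref{def:mingamepay}. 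Since each $\Sigma_p$ is a nonempty compact convex subset of a finite-dimensional Euclidean space, so is $\Sigma$, and the goal is to verify the hypotheses of Kakutani's fixed point theorem for $BR$.

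The crucial observation is that for any fixed $\sigma_{-p}$ and any index $i\in\{1,\dots,k\}$, the map $\sigma'_p \mapsto \E[u^i_p(X)]$, where $X$ is the play instance with law $\sigma - \sigma_p + \sigma'_p$, is a linear (in fact affine) function of $\sigma'_p$, because each $\E[u^i_p(X)]$ is multilinear in the tuple $(\sigma_q)_{q\in P}$ of mixed strategies. Therefore $\sigma'_p \mapsto \pay_p(\sigma - \sigma_p + \sigma'_p)$ is a minimum of finitely many affine functions; in particular it is \emph{continuous} and \emph{concave} in $\sigma'_p$, and it is \emph{jointly continuous} in the whole tuple $\sigma$.

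Continuity together with compactness of $\Sigma_p$ gives that $BR_p(\sigma)$ is nonempty. Concavity in $\sigma'_p$ together with the fact that the argmax set of a concave function over a convex set is convex gives that $BR_p(\sigma)$ is convex. Finally, joint continuity of $\pay_p$ combined with Berge's maximum theorem gives that $BR_p$ is upper hemicontinuous with closed graph; the product $BR = \prod_p BR_p$ inherits all three properties. Kakutani's theorem then produces the desired fixed point $\sigma^*\in BR(\sigma^*)$, which is an equilibrium.

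The only step where something nonstandard happens relative to the textbook Nash proof is convexity of the values of $BR_p$: classically one uses that $\pay_p$ is linear in $\sigma_p$, whereas here one only has concavity. This is the step I expect to be the main (and only) obstacle, and it is handled cleanly by the observation that a minimum of affine functions is concave. Everything else, including nonemptiness, upper hemicontinuity, and the final application of Kakutani, is routine.
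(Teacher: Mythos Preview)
Your proof is correct and is essentially the paper's first proof: the paper also applies Kakutani to the best-response correspondence, with the same key observation that $\sigma'_p\mapsto\pay_p(\sigma-\sigma_p+\sigma'_p)$ is a minimum of finitely many linear functions and hence concave, which is exactly what rescues convexity of the best-response sets. The only cosmetic difference is that you invoke Berge's maximum theorem for the closed-graph property whereas the paper verifies it by a direct sequential argument; the paper also offers a second, independent proof by reducing the min-$k$-game to an ordinary finite game with auxiliary ``world-chooser'' opponents and appealing to Nash's theorem.
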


\begin{remark}
Consider a one-player min-game where the player has two strategies $s_1$ and $s_2$, and the payoff functions are $u^1(s_1) = 1, u^1(s_2) = 2, u^2(s_1) = 3, u^2(s_2) = 0$. Then $\pay(p s_1 + (1-p) s_2)$ is the minimum of $p + 2(1-p)$ and $3p$. This minimum is achieved at $p^* = 1/2$ and has value $3/2$. 

In particular, note that this game can not be represented by a one-player finite game, because every one-player finite game has a linear payoff function, whereas for this game the payoff function is not linear.
\end{remark}

We will give two proofs for  Theorem \ref{thm:mingame}. The first proof will employ Kakutani's fixed point theorem in a more general way than Nash's proof for existence of equilibrium in finite games. Recall the statement of Kakutani's fixed point theorem from \cite{osborne1994course}.

\begin{theorem}[Kakutani's fixed point theorem]\label{thm:kfp}
Let $S$ be a non-empty, compact, and convex subset of some Euclidean space $\MR^n$. Let $B:S\to 2^S$ be a set-valued function on $S$ with a closed graph and the property that $B(x)$ is non-empty and convex for all $x\in S$. Then $B$ has a fixed point.
\end{theorem}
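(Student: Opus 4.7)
The plan is to deduce Kakutani's theorem from Brouwer's fixed point theorem by approximating the set-valued map $B$ by a sequence of continuous single-valued maps obtained from simplicial subdivisions of $S$. Since $S$ is a compact convex subset of $\mathbb{R}^n$, I can embed it in a large simplex and, by a standard retraction argument, reduce to the case where $S$ itself is triangulable; alternatively, one fixes a sequence of simplicial complexes $T_1, T_2, \ldots$ covering $S$ whose mesh tends to $0$.

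For each $m$, at every vertex $v$ of $T_m$ that lies in $S$, I select an arbitrary point $y_v \in B(v)$ (using nonemptiness of $B(v)$). Extend this selection to a map $f_m : S \to \mathbb{R}^n$ that is affine on each simplex of $T_m$ and agrees with $v \mapsto y_v$ on the vertices. Convexity of $S$ ensures $f_m(S) \subseteq S$, and $f_m$ is continuous. Brouwer's theorem, applied to $f_m$, produces a fixed point $x_m = f_m(x_m) \in S$. Let $v_m^0, v_m^1, \ldots, v_m^n$ be the vertices of a simplex of $T_m$ that contains $x_m$, so that $x_m = \sum_i \lambda_m^i v_m^i$ and $f_m(x_m) = \sum_i \lambda_m^i y_{v_m^i}$ for some barycentric coordinates $\lambda_m^i \geq 0$ with $\sum_i \lambda_m^i = 1$. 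Hence
\[
	x_m \;=\; \sum_{i=0}^{n} \lambda_m^i \, y_{v_m^i}, \qquad y_{v_m^i} \in B(v_m^i).
\]

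Now I take limits. By compactness of $S$ and of the simplex of coefficients, after passing to a subsequence I can assume $x_m \to x^* \in S$, $\lambda_m^i \to \lambda^i$, and $y_{v_m^i} \to y^i$ for each $i$. Since the mesh of $T_m$ tends to zero and $x_m \to x^*$, each $v_m^i \to x^*$ as well. The pairs $(v_m^i, y_{v_m^i})$ lie in the graph of $B$, which is closed by hypothesis, so $(x^*, y^i)$ lies in the graph; that is, $y^i \in B(x^*)$ for every $i$. Passing to the limit in the displayed equation gives $x^* = \sum_i \lambda^i y^i$, and convexity of $B(x^*)$ then yields $x^* \in B(x^*)$, which is the desired fixed point.

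The main obstacle is the final limit step: one needs simultaneously to control the finitely many vertex sequences $v_m^i$, the selected images $y_{v_m^i}$, and the barycentric weights, and then to exploit both the closed-graph property (to place each $y^i$ inside $B(x^*)$) and the convexity of $B(x^*)$ (to combine them into $x^*$ itself). Without convexity of $B(x^*)$ the convex combination step fails; without closed graph the images $y^i$ might escape $B(x^*)$. Everything else is routine triangulation, Brouwer, and subsequence extraction.
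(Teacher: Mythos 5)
The paper does not prove this statement at all: Kakutani's fixed point theorem is imported as a black box, quoted from the cited textbook of Osborne and Rubinstein, and then used in the first proof of Theorem~\ref{thm:mingame}. So there is no in-paper argument to compare yours against. On its own merits, your proof is the classical simplicial-approximation derivation of Kakutani from Brouwer and it is correct: select $y_v\in B(v)$ at each vertex, interpolate affinely to get continuous maps $f_m:S\to S$ (the image stays in $S$ because each value is a convex combination of points of $S$), apply Brouwer to get $x_m=\sum_i\lambda_m^i y_{v_m^i}$, extract convergent subsequences of the finitely many vertex, weight, and image sequences, use the vanishing mesh to force $v_m^i\to x^*$, the closed graph to conclude $y^i\in B(x^*)$, and convexity of $B(x^*)$ to conclude $x^*\in B(x^*)$. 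The only step that deserves a little more care is the very first one: a general compact convex $S\subseteq\MR^n$ need not admit triangulations directly, so one should carry out the argument on a large simplex $\Delta\supseteq S$ and replace $B$ by $x\mapsto B(r(x))$, where $r:\Delta\to S$ is the nearest-point retraction; a fixed point $x\in B(r(x))\subseteq S$ then satisfies $r(x)=x$, hence $x\in B(x)$. You flag exactly this reduction, so the proof is complete in outline; if the paper wanted a self-contained treatment this is the argument it would need, but as written it deliberately delegates it to the reference.
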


\subsection{The first proof}

\begin{proof}[I for Theorem~\ref{thm:mingame}]
Fix a min-$k$-game with players $P$. For each player $p\in P$, let $\Sigma_p$ denote the set of mixed strategies of player $p$. Let $\Sigma = \prod_{p\in P}\Sigma_p$ be the set of all plays. We will define an \emph{improvement map} $B:\Sigma\to\Sigma$. The map represents what happens if all the players unilaterally and simultaneously improve their mixed strategy in response to the play $\sigma$. This map $B$ will satisfy the requirements of Kakutani's fixed point theorem, allowing us to obtain our result.

We first define the map $B_p:\Sigma\to 2^{\Sigma_p}$ for player $p\in P$ by \[
\sigma\in\Sigma\longmapsto\argmax_{\sigma'_p \in \Sigma_p} \pay_p(\sigma - \sigma_p +\sigma'_p).
\]
It is a set-valued map because the range is all the $\sigma'_p\in\Sigma_p$ at which the supremum is attained. Now define $B:\Sigma\to 2^\Sigma$ as the map that takes $\sigma\in\Sigma$ to the product set $\prod_{p\in P} B_p(\sigma)$.

We claim that a fixed point $\sigma$ of $B$ must be an equilibrium in the sense of Definition~\ref{def:mingamepay}.\ref{def:mingameeq}. This is immediate by checking that each player tolerates $\sigma$, which follows directly from the way $B$ was defined, and from the definition of fixed point. We now show that the conditions of Kakutani's fixed point theorem (Theorem~\ref{thm:kfp}) are met by the map $B$.
\begin{itemize}
\item Since $\Sigma_p$ are non-empty simplexes,  the set $\Sigma= \prod_{p \in P} \Sigma_p$ of plays is convex, compact and non-empty.
%\\
\item $B(\sigma)$ is non-empty for all $\sigma\in \Sigma$. This is because for each player $p\in P$, the set $B_p(\sigma)$ is non-empty, since $\pay_p$ is continuous, and the supremum of a continuous function over a compact set is always attained.
%
% Since $\pay_p$ is the minimum over $w\in W$ of finite number of continous functions $u^w_p$, it is continuous and its domain $\Sigma_p$ is a compact set. Since every continuous function on a compact set attains its extremal values, it follows that for each $p$ the set $B_p(\sigma)$ is non-empty. Hence, $B(\sigma)$ is non-empty for every $\sigma$.
%\\
\item We claim that $B(\sigma)$ is convex for all $\sigma\in \Sigma$. It is enough to argue that each set $B_p(\sigma)$ is convex since the finite product of convex sets is convex. The key observation is that, fixing $\sigma$ and $p$, the function $f(\sigma'_p):= \pay_p(\sigma - \sigma_p + \sigma'_p)$ is the minimum of a finite number of linear functions, and is hence a (piecewise-linear) concave function. In other words, it is straightforward to verify that:
\begin{align*}
\pay_p(\sigma - \sigma_p + (\sigma^1_p+ \sigma^2_p)/2) 
&\geq \frac 1 2 \left( \pay_p(\sigma - \sigma_p + \sigma^1_p) +  \pay_p(\sigma - \sigma_p +  \sigma^2_p) \right)
\end{align*}
where $\sigma^1_p,\sigma^2_p\in\Sigma_p$. If $\sigma^1_p,\sigma^2_p\in B_p(\sigma)$, then $(\sigma^1_p+ \sigma^2_p)/2\in B_p(\sigma)$ because
\[\pay_p(\sigma - \sigma_p + \sigma^1_p) = \pay_p(\sigma - \sigma_p +  \sigma^2_p) = \sup_{\sigma'_p \in \Sigma_p} \pay_p(\sigma - \sigma_p +\sigma'_p),\]
and our claim is proved.
%
%Since $\pay_p$ is the minimum of a finite number of functions linear in $\sigma_p$, it is concave in $\sigma_p$. Now $B_p(\sigma)$ is the $\arg\sup$ of a concave function over a compact set $\Delta_{S_p}$. Hence it is convex.
%
%
% For all $\sigma \in \Sigma$, the set $B_p(\sigma)$ is a convex set by Claim \ref{claim:bp-convex}. Hence, for all $\sigma \in \Sigma$, the set $B(\sigma)$ is a convex set.
%Need citation for Weierstrass extreme value theorem, [Rudin] may be.
%\\
\item We claim that the graph $\{ (\sigma,B(\sigma)) \mid \sigma\in \Sigma\}$ of $B$ is closed. This will essentially follow from the continuity of the payoff function. Consider a sequence $(\sigma(n),\tau(n))_{n \in \MN}$ such that $\tau(n) \in B(\sigma(n))$ for all $n \in \MN$. Suppose this sequence converges to $(\sigma^*,\tau^*)$. Then continuity of $\pay$ implies that $\pay(\tau(n))\to\pay(\tau^*)$. 

Suppose for contradiction that $\tau^*\notin B(\sigma^*)$. Then there exists a player $p\in P$ with a mixed strategy $\tau'_p\in\Sigma_p$ such that:
\[
\pay_p(\tau^* -\tau^*_{p}+ \tau'_p) > \pay_p(\tau^*) = \lim_{n\to\infty} \pay_p(\tau(n)).
\]
We conclude that $\pay_p(\tau(n) - \tau_p(n) + \tau'_p) > \pay_p(\tau(n))$ for large $n$, because $\pay_p(\tau(n) - \tau_p(n) + \tau'_p)$ can be made arbitrarily close to $\pay_p(\tau^* -\tau^*_{p}+ \tau'_p)$ by continuity of $\pay_p$, and $\pay_p(\tau(n))$ can be made arbitrarily close to $\pay_p(\tau^*)$. This is a contradiction, since we had supposed that $\tau(n)\in B(\sigma(n))$. Our claim is proved.
\end{itemize}
This completes the proof.
\end{proof}

\subsection{The second proof}
\begin{proof}[II for Theorem~\ref{thm:mingame}]
Fix a min-$k$-game with players $P$. We will construct a finite game with  players $P \cup \hat{P}$ where $|\hat{P}| = |P|$ and $P\cap\hat{P}=\emptyset$, with the property that each Nash equilibrium in this finite game corresponds to an equilibrium for the min-game. Since Nash equilibria exist for finite games, the theorem follows.

We now describe the finite game. For each player $p\in P$, introduce a new and distinct player $\hat{p}\in\hat{P}$. We call $\hat{p}$ the \textbf{world-chooser opponent} of $p$. The set $S_{\hat{p}}$ of strategies of $\hat{p}$ equals the set $W$ of worlds.

We now describe the payoffs. Fix a player $p\in P$. Then the payoff $u_p : \prod_{q\in P\cup\hat{P}} S_q \to\MR$ will
 depend only on $S_{\hat{p}} \times \prod_{q\in P} S_q$, and ignore the strategies played by the world-chooser 
 opponents of the other players. If $w\in S_{\hat{p}}=W$ and $s\in  \prod_{q\in P} S_q$ then we define 
 $u_p(w,s):= u_p^w(s)$. The payoff $u_{\hat{p}}$ also depends only on the strategy played by the players $P$, and the 
 player $\hat{p}$, and is given by $u_{\hat{p}}(w,s) = -u_p(w,s) = - u_p^w(s)$.

In words, the payoff to player $p$ comes from the payoff function in the world chosen by $\hat{p}$. The world-chooser opponent $\hat{p}$ gets a payoff that is the negative of the payoff of $p$. The effect of this is that player $\hat{p}$ wants to minimize $p$'s payoff, by choosing the world in which $p$'s payoff is minimum. It remains to verify that a Nash equilibrium in this finite game corresponds to our notion of equilibrium for a min-game. 

Consider a play $\sigma$. Define the \textbf{lowered} play $\ell(\sigma):=\prod_{p\in P} \sigma_p$. Then  
\begin{align}\label{eqn:minnash}
\pay_p(\ell(\sigma)) \leq \pay_p(\sigma)
\end{align} for each player $p\in P$. Here $\pay_p(\ell(\sigma))$ and $\pay_p(\sigma)$ denote the payoff to player $p$ from the \emph{min-game play} and the \emph{finite game play} respectively. Equality holds for all plays $\sigma$ tolerated by $\hat{p}$, in particular for all Nash equilibria.

If $\sigma$ is a Nash equilibrium then we claim that $\ell(\sigma)$ is an equilibrium for the original min-game. For suppose player $p$ is considering another mixed strategy $\sigma'_p$. We have:
\[
\pay_p(\ell(\sigma) - \sigma_p + \sigma'_p)=\pay_p(\ell(\sigma - \sigma_p + \sigma'_p)) \leq \pay_p(\sigma - \sigma_p + \sigma'_p) \leq \pay_p(\sigma) = \pay_p(\ell(\sigma))
\]
The first inequality is by Equation~\ref{eqn:minnash}. The last equality is by Equation~\ref{eqn:minnash}, using the fact that $\sigma$ is a Nash equilibrium. Hence every player tolerates $\ell(\sigma)$, and so it is an equilibrium by definition.
\end{proof}

\section{Reducing Kripke games to min-games}\label{sec:kripketomin}
\begin{theorem}\label{thm:kripkemin}
Every Kripke game has an equilibrium play.
\end{theorem}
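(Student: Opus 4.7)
The plan is to derive Theorem~\ref{thm:kripkemin} from Theorem~\ref{thm:mingame} by exhibiting, for any Kripke game, a min-game whose equilibria induce Kripke equilibria. The key preliminary fact is a \textbf{decoupling lemma}: with the other players' strategies fixed, $\pay^w_p(\sigma)$ depends on $\sigma_p$ only through its marginal $\sigma_{p,[w]_p}$ on the equivalence class $[w]_p$. This follows because $\pay^w_p(\sigma) = \min_{v \in [w]_p} \E[u^v_p(X(v))]$, each term expands by independence as $\sum_s u^v_p(s) \prod_q \sigma_{q,[v]_q}(s_q)$, and $v \in [w]_p$ forces $[v]_p = [w]_p$. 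Consequently, the Kripke tolerance for player $p$ decouples across equivalence classes: $\sigma$ is a Kripke equilibrium iff, for every $p \in P$ and every $c \in [W]_p$, the class-$c$ marginal of $\sigma_p$ maximizes $\pay^w_p$ (for any representative $w \in c$), with all other marginals held fixed.

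Guided by this, I introduce a min-game $M$ on the expanded player set $\tilde P := \{(p,c) : p \in P,\, c \in [W]_p\}$. Sub-player $(p,c)$ has strategy set $S_p$, and for each world $v \in c$ a payoff
\[
\tilde u^v_{(p,c)}\bigl((s_{(q,c')})_{(q,c') \in \tilde P}\bigr) := u^v_p\bigl((s_{(q,[v]_q)})_{q \in P}\bigr).
\]
After setting $k := \max_{(p,c)} |c|$ and padding shorter payoff lists by repetition, this is a min-$k$-game in the sense of Definition~\ref{def:mingame}. By Theorem~\ref{thm:mingame}, $M$ has an equilibrium $\tilde\sigma = (\tilde\sigma_{(p,c)})_{(p,c) \in \tilde P}$.

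From $\tilde\sigma$ I produce the candidate Kripke play $\sigma$ by declaring $\sigma_p$ to be the product distribution $\prod_{c \in [W]_p} \tilde\sigma_{(p,c)}$ on $S_p^{[W]_p}$. The verification has two parts. First, by direct computation using independence and the definition of $\tilde u^v_{(p,c)}$, for every world $w$ with $c = [w]_p$ one has $\pay^w_p(\sigma) = \pay_{(p,c)}(\tilde\sigma)$, and more generally $\pay^w_p(\sigma - \sigma_p + \sigma'_p) = \pay_{(p,c)}\bigl(\tilde\sigma - \tilde\sigma_{(p,c)} + \tau\bigr)$ whenever $\tau$ is the marginal of $\sigma'_p$ on $c$. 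Second, by the decoupling lemma, the Kripke tolerance of $p$ at $w$ against arbitrary $\sigma'_p$ reduces to tolerance against deviations that change only the $c$-marginal, so the min-game tolerance of $(p,c)$ at $\tilde\sigma$ yields the Kripke tolerance of $p$ at $w$. Ranging over all $p$ and $w$ shows $\sigma$ is a Kripke equilibrium.

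The main obstacle is the decoupling lemma together with the bookkeeping that identifies Kripke mixed strategies (distributions over $S_p^{[W]_p}$) with tuples of per-class distributions over $S_p$, and verifies that the payoff and deviation structures translate faithfully between the Kripke game and the expanded min-game. Once these identifications are in place, Theorem~\ref{thm:kripkemin} is an immediate consequence of Theorem~\ref{thm:mingame}.
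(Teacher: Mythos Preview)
Your proposal is correct and follows essentially the same route as the paper: split each Kripke player $p$ into sub-players indexed by $[W]_p$, build a min-game on this expanded player set, and lift a min-game equilibrium to a Kripke equilibrium via the identification $\pay^w_p(\sigma)=\pay_{(p,[w]_p)}(\tilde\sigma)$. The only cosmetic differences are that you pad the payoff list by repetition to length $\max_{(p,c)}|c|$ whereas the paper pads to length $|W|$ using a large constant $+A$, and that you make explicit the decoupling lemma that the paper invokes only implicitly in its final contradiction argument.
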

\begin{proof}
We will deduce this theorem by reducing a Kripke game to a min-game with the property that the min-game's equilibria correspond to equilibria of the original Kripke game. The result then follows from Theorem \ref{thm:mingame}. 

Fix a Kripke game with players $P$ on worlds $W$. We will construct a min-$|W|$-game with players \[\mathcal{P}=\cup_{p\in P} [W]_p.\] That is, each player of the original Kripke game is split into multiple players, one for each $\equiv_p$-equivalence class. Player $[w]_p$ is assigned the strategy set $S_p$ of player $p$ in the original Kripke game. On world $v\in W$, the payoff 
\[
\mathfrak{u}^v_{[w]_p}\left(\prod_{\mathfrak{p}\in\mathcal{P}} s_\mathfrak{p} \right) 
= 
\begin{cases}
u_p^v\left(\displaystyle\prod_{q\in P} s_{[w]_q} \right) 
\text{ if } v\equiv_p w\
\\+A \text{ otherwise.}
\end{cases}
\]
where $u_p^v$ is the payoff from the original Kripke game, and $A>0$ is a positive constant chosen sufficiently large so that every player's minimum payoff is guaranteed to be less than $A$, in every play. 

In words, we have ensured that the payoff to player $[w]_p$ comes only from worlds $v\equiv_p w$. The only opponents who matter to $[w]_p$ are players $[w]_q$ where $q\in P$.

Given a play $\sigma$ for this min-game, we construct a play $\tau$ for the original Kripke game as follows: the mixed strategy $\tau_p^{[w]_p}$ is defined as $\sigma_{[w]_p}$. The key property is that
\[
	\pay^w_p(\tau) = \pay_{[w]_p}(\sigma)
\]
where the LHS is the payoff for the Kripke game, and the RHS for the min-game. This allows us to prove that if $\sigma$ is an equilibrium play for this min-game then $\tau$ is an equilibrium play for the original Kripke game. If not, then player $p$ can improve her payoff in world $w$ in the Kripke game, which will translate to an improved payoff for player $[w]_p$ in the min-game, contradicting the assumption that $\sigma$ was an equilibrium play.
\end{proof}

%$G=(P,\{ S_p \}_{p \in P},W, \{ G^w \}_{w\in W}, \{ \equiv_p \}_{p \in P})$ . Let 
%$$P' := \{ (p,a): p \in P, a \text{ is an equivalence class in } \equiv_p \}, ~ S_{(p,a)} :=S_p \text{ and }\equiv_{(p,a)} :=W\times W.$$
% Let $ u^w_{(p,a)} := u^w_p$ if $w\in a$ and $u^w_{(p,a)} := +\infty$ otherwise. The min game 
%$$G'=(P',\{S_{p}\}_{p\in P'},W,\{G^{'w}\}_{w\in W},\{W \times W\}_{p \in P}) \text{ where } G^{'w}=(P',\{S_{p}\}_{p\in P'},\{ u^w_p\}_{p \in P'})$$
% has an equilibrium $\eta$ by Theorem \ref{thm:min-equi}. Now consider the play $\sigma$ for $G$ defined by $\sigma^w_p = \eta^w_{(p,[w])}$, where $[w]$ is a representative for the equivalence class of $w$ in $\equiv_p$.
%
%\begin{claim}
%$\sigma$ is a equilibrium of $G$.
%\end{claim}
%\begin{proof}
%Suppose it is not. Then there is a player $p\in P$ such that $\sigma_p \notin \arg\max_{\sigma'_p } \pay^w_p(\sigma_{-p}\sigma'_p)$. Then there is some equivalence class $a \in \equiv_p$ such that $\eta_{(p,a)} \notin \arg\max_{\eta'_{(p,a)} } \min_{w \in W} u^w_{(p,a)}(\eta_{-(p,a)}\eta'_{(p,a)})$, which contradicts the fact that $\eta$ was an equilibrium of $G'$.
%\end{proof}
%\end{proof}

\section{Related work}
Games with incomplete information, and epistemic notions in game theory, have appeared before, in particular in several works by Aumann. The closest to our work appears to be Aumann's work on Interactive Epistemology~\cite{aumann1999interactive}. There are parallels between Aumann's \emph{semantic approach} and our approach. Aumann's space $\Omega$ of states of the world appears to correspond to our set $W$ of worlds. His partitions on the world appear to correspond to the equivalence class partition induced by the equivalence relations $\equiv_p$ for each player $p$. The focus in \cite{aumann1999interactive} appears to be to provide epistemic justifications for many of the standard assumptions of game theory, whereas our focus is on how epistemic considerations affect strategy.
\bibliographystyle{amsplain}
\bibliography{KripkeGames}
\end{document}